\documentclass[letterpaper, 10 pt, conference]{ieeeconf}

\IEEEoverridecommandlockouts
\renewcommand{\baselinestretch}{0.976783}

\usepackage{float}
\usepackage{paralist}
\usepackage{amsfonts}
\usepackage{graphics}
\usepackage{epsfig}
\usepackage{mathptmx}
\usepackage{times}
\usepackage{amsmath}
\usepackage{amssymb}
\usepackage{physics}
\usepackage{mdwmath}
\usepackage{mdwtab}
\usepackage{color}
\usepackage[hidelinks]{hyperref}
\usepackage{algpseudocode}
\usepackage{algorithm}
\usepackage{caption}
\usepackage{subcaption}
\usepackage{bm}
\usepackage{tikz}
\usetikzlibrary{shapes, arrows.meta, positioning}
\usepackage{booktabs}
\usepackage{tabularx}
\usepackage{multirow}

\newtheorem{theorem}{Theorem}

\newtheorem{proposition}{Proposition}

\newtheorem{remark}{Remark}

\newtheorem{assumption}{Assumption}

\begin{document}

\title{\LARGE \bf
Certified Quantum Schrödinger Control via Hierarchical Tucker Models
}

\author{Nahid Binandeh Dehaghani, Rafal Wisniewski, A. Pedro Aguiar
\thanks{N. Dehaghani and R. Wisniewski are with Department of Electronic Systems, Aalborg University, Fredrik Bajers vej 7c, DK-9220 Aalborg, Denmark
        {\tt\small \{nahidbd,raf\}@es.aau.dk}}%
\thanks{A. Pedro Aguiar is with the Research Center for Systems and Technologies (SYSTEC), Electrical and Computer Engineering Department, FEUP - Faculty of Engineering, University of Porto, Rua Dr. Roberto Frias sn, i219, 4200-465 Porto, Portugal
        {\tt\small pedro.aguiar@fe.up.pt}}
}

\maketitle

\begin{abstract}
High-dimensional Schrödinger systems arising from tensor‑product discretizations suffer from exponential state growth, making direct controller synthesis and real‑time closed‑loop simulation computationally challenging. 
Hierarchical Tucker (HT) tensor representations offer scalable low‑rank surrogates, but the impact of fixed‑rank truncation on closed‑loop stability is not well understood. This paper develops a local robustness framework for sampled‑data feedback control implemented with fixed‑rank HT projections. By viewing each truncation as a bounded, rank‑dependent perturbation of the nominal closed loop, and assuming a local phase‑invariant contraction certificate together with trajectory‑level hierarchical spectral decay, we show that the HT‑projected dynamics are practically exponentially stable: trajectories converge to a dimension‑independent tube whose radius decreases with the prescribed rank. We further obtain an explicit logarithmic rank–accuracy relation and establish conditions under which controllers designed on the HT‑truncated surrogate model retain practical exponential tracking guarantees when deployed on the full system, together with an explicit bound quantifying the resulting surrogate‑to‑plant mismatch. A compact lattice example demonstrates the applicability of the framework.
\end{abstract}

\section{Introduction}
High-dimensional controlled Schr\"odinger equations arise in many areas of science and engineering, including quantum control~\cite{dalessandro2007intro,altafini2012quantum}, distributed wave phenomena~\cite{TeschlQM}, and quantum-inspired formulations of fluid dynamics such as the hydrodynamic Schr\"odinger equation (HSE)~\cite{meng2023HSE}. After discretization of a $p$-dimensional spatial domain $\Omega \subset \mathbb{R}^p$, the resulting state dimension typically grows exponentially with the number of degrees of freedom (e.g., $N=d^n$ for local dimension $d$). This curse of dimensionality renders direct controller synthesis and real-time closed-loop simulation computationally intractable, creating a major obstacle for feedback control of high-dimensional quantum systems.

Low-rank tensor representations provide an effective approach to mitigating this challenge. In particular, the hierarchical Tucker (HT) format enables scalable representations of high-dimensional tensors by exploiting hierarchical low-rank structure~\cite{Hackbusch2019TensorBook,grasedyck2010hierarchical}.
When the state remains close to a low-rank HT manifold, storage and propagation costs depend primarily on the hierarchical rank rather than on the ambient Hilbert-space dimension. 
Among low‑rank tensor formats, the HT representation is particularly well suited for high‑dimensional PDEs due to its balanced binary tree structure and dimension‑independent conditioning \cite{Hackbusch2019TensorBook}.
These properties make HT representations attractive for constructing tractable surrogate models of high-dimensional quantum systems. Evidence from many‑body physics further supports the compressibility that HT methods exploit: locality implies finite information‑propagation speed (Lieb–Robinson bounds), and low‑entanglement regimes are captured by area‑law behavior of entanglement entropy; together these mechanisms lead to rapid decay of Schmidt spectra across physically relevant bipartitions and hence low‑rank approximability over finite horizons \cite{lieb1972finite, EisertCramerPlenio2010}. 

Despite their success in numerical simulation, hierarchical tensor methods have been studied mainly in open-loop settings. In feedback control applications, the tensor representation must enforce a fixed rank at each time step, typically through hierarchical SVD truncation
\cite{grasedyck2010hierarchical} and related thresholding schemes \cite{BachmayrSchneider2017}. Such projections introduce approximation errors that perturb the nominal dynamics, raising important questions about stability and tracking performance when controllers are implemented on 
surrogate HT tensor models.
This paper develops a stability‑certified framework for feedback control of high-dimensional Schr\"odinger systems using fixed-rank hierarchical Tucker representations. 

\paragraph*{Contributions} The technical core of the paper lies in interpreting HT truncation as a structured disturbance and analyzing its interaction with contraction‑based stability.
The main contributions of this paper are:
(i) We show that enforcing a fixed HT rank via hierarchical SVD truncation introduces a structured perturbation whose norm decays exponentially with the hierarchical rank.
(ii) Under an incremental contraction assumption on the nominal closed loop, the projected dynamics remain practically exponentially attractive to a forward-invariant reference set, with an asymptotic error bound that decreases exponentially with the rank.
(iii) We establish conditions under which controllers synthesized on the HT‑truncated surrogate model retain practical exponential tracking guarantees when applied to the full system.
(iv) We derive a rank–accuracy design rule linking the hierarchical rank to a desired tracking tolerance and provide a projected closed-loop scheme based on operator splitting and fixed-rank HT truncation.


\section{Controlled Schr\"odinger Model}
\label{sec:model}
This section presents the continuum and semi-discrete controlled Schr\"odinger dynamics, together with the tensorized finite-dimensional representation used in the sequel. It also introduces the sampled-data discrete-time flow and the nominal closed-loop assumptions that underlie the later robustness analysis.
\subsection{Continuous and Semi-Discrete Schr\"odinger Dynamics}
\label{subsec:continuous_semidiscrete}

Consider controlled Schr\"odinger dynamics in bilinear form
\begin{equation}
  i\hbar\,\partial_t \psi(x,t)
  = \big(H_0 + u(t)H_1 \big)\psi(x,t),
  \quad \!\! \! \psi(\cdot,t)\in L^{2}(\Omega;\mathbb{C}),
  \label{eq:schrodinger}
\end{equation}
where $H_0$ is a densely defined self-adjoint operator generating the free dynamics and $H_1$ is a self-adjoint control Hamiltonian defined on the same invariant domain. The input $u:\mathbb{R}_{\ge0}\to\mathbb{R}$ is a measurable scalar control amplitude.
Throughout, $\Omega\subset\mathbb{R}^p$ is a bounded Lipschitz domain equipped with boundary conditions (e.g., Dirichlet or periodic) chosen so that $H_0$ is self-adjoint on a common invariant domain; see~\cite{TeschlQM,ReedSimonII}. This continuum model serves as the starting point for the sequel and will later be replaced by a semi-discrete finite-dimensional tensorized representation.



\begin{assumption}[Well-posedness]\label{ass:wellposed}
The operators $H_0$ and $H_1$ are densely defined and self-adjoint on a common invariant domain $\mathcal{D}\subset L^2(\Omega)$, and the control input satisfies $|u(t)| \le u_{\max}$ for all $t \ge 0$. For every admissible control, the Hamiltonian
$
H(t) = H_0 + u(t) H_1
$
is self-adjoint on $\mathcal{D}$.
\end{assumption}

\begin{remark}[Sufficient conditions]
Assumption~\ref{ass:wellposed} is standard in bilinear Schr\"odinger control. Typical sufficient conditions include the case where $H_1$ is bounded and the case where $H_1$ is $H_0$-bounded with relative bound strictly smaller than $1$, together with a uniformly bounded control input. In these cases, the self-adjointness of $H(t)=H_0+u(t)H_1$ on $\mathcal D$ follows from the Kato--Rellich theorem; see, e.g.,~\cite{ReedSimonII}. Under Assumption~\ref{ass:wellposed}, the evolution preserves the $L^2$ norm, i.e.,
$\|\psi(t)\|_{L^2}=\|\psi(0)\|_{L^2}$, $t\ge 0$.
\end{remark} 

\paragraph*{Spatial discretization and tensor structure}
For numerical implementation and low-rank tensor approximation, we discretize
$\Omega\subset\mathbb{R}^p$ on a tensor-product grid with $n_j$ degrees of freedom
along each spatial direction $j=1,\dots,p$. Let
$N = \prod_{j=1}^{p} n_j$
denote the total number of spatial degrees of freedom. To enable hierarchical
tensor representations, we further assume that the same discrete state space can
be tensorized as
$\mathcal{H}_N \cong \bigotimes_{i=1}^{n} \mathbb{C}^{d_i}$,
so that
$N=\prod_{i=1}^n d_i$.
Thus, the quantities $n_j$ describe the physical spatial discretization, whereas
$n$ and $d_i$ describe the chosen tensorization used for low-rank representation.
For simplicity, we restrict to the uniform case $d_i=d$, so that $N=d^{\,n}$,
where $n$ denotes the number of tensor factors in the chosen tensorization and
need not coincide with the physical dimension $p$. Such a factorization may
arise either from a tensor-product basis construction or from a structured
reshaping of the discrete state vector.

The semi-discrete wavefunction is therefore represented as $\Psi(t)\in\mathbb{C}^{N}$. Since the ambient dimension $N=d^n$ grows exponentially with the number of tensor factors $n$, direct simulation and feedback computation rapidly become impractical. This motivates low-rank tensor representations, in particular the hierarchical Tucker format used in this work. Throughout, dimension-independent refers to the absence of exponential dependence on the ambient dimension;
constants may still depend polynomially on structural quantities such as $n$, locality parameters, and the chosen tensor-network topology.

Applying a spatial discretization to~\eqref{eq:schrodinger} at dimension $N$ yields the semi-discrete controlled Schr\"odinger system
\begin{equation}
i\hbar\,\dot{\Psi}(t)
= \big(H_0^N + u(t)H_1^N\big)\Psi(t),
\qquad H_0^N,\,H_1^N \in \mathbb{C}^{N\times N},
\label{eq:discrete}
\end{equation}
where the superscript $N$ indicates dependence on the chosen discretization level, with $N=\prod_{j=1}^p n_j$ the total number of spatial degrees of freedom. Thus, $\{H_0^N,H_1^N\}_N$ should be understood as a family of finite-dimensional approximations of the continuum operators $H_0$ and $H_1$. We assume that, for each discretization level $N$, the matrices $H_0^N$ and $H_1^N$ are Hermitian discretizations of $H_0$ and $H_1$. This is satisfied by standard symmetric discretization schemes with compatible boundary conditions. Under this assumption, the semi-discrete evolution, and hence its exact time-$\Delta t$ propagator, is unitary; see, e.g.,~\cite{KohlmannSchrodingerSpectra,SolovejHilbertSpace}.

\begin{assumption}[Finite-range locality]
\label{ass:locality}
For each discretization level $N$, the matrices $H_0^N$ and $H_1^N$ admit decompositions
$H_0^N = \sum_{\ell=1}^{L_0(N)} h_\ell,$
$H_1^N = \sum_{\ell=1}^{L_1(N)} \tilde h_\ell,$
where each local term acts nontrivially only on a uniformly bounded subset of tensor factors in the chosen tensorization of $\mathbb C^N$. We assume that the associated interaction graph has degree bounded independently of $N$, that
$\|h_\ell\|,\;\|\tilde h_\ell\|\le C_h,$
with $C_h$ independent of $N$, and that the local tensor dimension $d$ remains fixed as the number of tensor factors $n$ increases.
\end{assumption}

Under Assumption~\ref{ass:locality}, the controlled Hamiltonian
$H^N(t)=H_0^N+u(t)H_1^N$ admits a decomposition into local terms, each acting
nontrivially only on a uniformly bounded subset of tensor factors. In this
sense, the interaction range remains uniformly bounded for all admissible
controls satisfying $|u(t)|\le u_{\max}$. In standard lattice settings, such locality is compatible with Lieb--Robinson-type bounds, which provide a finite effective propagation velocity for information and correlations; see, e.g.,~\cite{lieb1972finite}. 


\subsection{Discrete-Time Map and Nominal Assumptions}
\label{subsec:discrete_time_nominal}

For a time step $\Delta t>0$, and assuming the control is held constant over each interval $[k\Delta t,(k+1)\Delta t)$ with value $u_k$, define the discrete-time flow
\begin{equation}
\Psi_{k+1}
= \mathcal{F}_{\Delta t}(\Psi_k,u_k)
:= \exp\!\Big(
-\tfrac{i\Delta t}{\hbar}(H_0^N+u_kH_1^N)
\Big)\Psi_k .
\label{eq:discrete_flow}
\end{equation}
Whenever $|u_k|\le u_{\max}$, the generator $H_0^N+u_kH_1^N$ is Hermitian, and therefore $\mathcal{F}_{\Delta t}$ is unitary and norm preserving. In practice, the exponential may be approximated by operator-splitting or other structure-preserving time-stepping schemes, such as Lie--Trotter or Strang splittings~\cite{Trotter1959}. 

\paragraph*{Nominal closed-loop contraction}
We consider state-feedback laws of sampled-data form
$
u_k=\kappa(\Psi_k),
$
where the control is held constant on each interval
$[k\Delta t,(k+1)\Delta t)$.
The associated nominal closed-loop dynamics are
$
\Psi_{k+1}=\mathcal F_{\Delta t}(\Psi_k,\kappa(\Psi_k)).
$
Since Schr\"odinger dynamics are unitary in the ambient Euclidean norm, contraction is not a generic property of the uncontrolled flow. Instead, the analysis below assumes 
the feedback-designed nominal closed loop admits a local contraction certificate in a metric adapted to the control objective, relative to a forward-invariant reference set.


\begin{assumption}(Nominal contraction toward the reference set)
\label{ass:incremental_contraction}
Let $\mathcal{R}\subset\mathbb{C}^N$ be a closed forward-invariant
reference set for the nominal closed loop
$
\Psi_{k+1}=\mathcal{F}_{\Delta t}(\Psi_k,\kappa(\Psi_k)).
$
Assume there exist a metric
$\operatorname{dist}_\star$ on the local operating region and a constant
$\rho\in(0,1)$ such that, for all $\Psi$ in a neighborhood of $\mathcal{R}$,
$
\operatorname{dist}_\star\!\big(
\mathcal{F}_{\Delta t}(\Psi,\kappa(\Psi)),\mathcal{R}
\big)
\le
\rho\,\operatorname{dist}_\star(\Psi,\mathcal{R}).
$
\end{assumption}

Typical choices of $\operatorname{dist}_\star$ are phase-invariant distances on the quotient of the state space by the global-phase equivalence relation $\psi\sim e^{i\theta}\psi$. Examples include
$
\operatorname{dist}_\star(\psi,\phi)
:=
\min_{\theta\in\mathbb R}\|\psi-e^{i\theta}\phi\|_2,
$
and projector-based distances such as
$
\operatorname{dist}_\star(\psi,\phi)
:=
\|\psi\psi^\ast-\phi\phi^\ast\|_F.
$
Thus, $\operatorname{dist}_\star$ should be understood as a metric on the associated quotient space (equivalently, on pure states modulo global phase), rather than as a norm on the original Hilbert space. In the local region relevant for the analysis, we assume that $\operatorname{dist}_\star$ is locally equivalent to the Euclidean distance modulo global phase, up to fixed constants.

\begin{remark}
Assumption~\ref{ass:incremental_contraction} is a local hypothesis on the
nominal feedback-closed loop, not a general property of bilinear
Schr\"odinger dynamics. Such local contraction conditions arise in feedback
stabilization and tracking analyses for bilinear Schr\"odinger systems; see,
e.g.,~\cite{boscain2014multi}. In the present
work, this assumption serves as the nominal stability certificate whose robustness
under hierarchical tensor truncation is analyzed.
\end{remark} 

\section{Low-Rank Hierarchical Tensor Representation and Fixed-Rank Truncation}
\label{sec:tensor_manifold}
This section introduces the hierarchical Tucker representation used for the semi-discrete quantum state and then defines the fixed-rank truncation mechanism employed in the HT-truncated surrogate dynamics.
We first describe the geometric and algebraic structure of the HT representation, including dimension trees, nodewise matricizations, singular values, and hierarchical ranks. We then introduce the practical HSVD truncation operator, the associated residual bounds, and the spectral decay assumptions used later to control truncation errors along the closed-loop trajectory.

\subsection{Hierarchical Tucker Representation and Ranks}
\label{subsec:HT_ranks}

We work in the semi-discrete state space $\mathbb{C}^{N}$ with $N=d^{\,n}$ and identify each state vector $\Psi\in\mathbb{C}^{N}$ with an order-$n$ tensor in
$\bigotimes_{i=1}^{n}\mathbb{C}^{d}$ via the tensorization described in
Section~\ref{sec:model}. For exact quantum states one has $\|\Psi\|=1$, where
$\|\cdot\|$ denotes the Euclidean norm on $\mathbb{C}^{N}$ (equivalently, the
Frobenius norm after tensor reshaping).

\paragraph*{Hierarchical Tucker (HT) model and ranks}

Let $\mathcal T$ be a fixed dimension tree whose leaves correspond to the
tensor modes $\{1,\ldots,n\}$. For each internal node $t\in\mathcal T_{\mathrm{int}}$
with complementary index set $\bar t$, we denote by
$X^{(t)}(\Psi)$ the hierarchical matricization of $\Psi$
associated with the split $(t,\bar t)$.
Concretely, if $t=\{\mu\}$ and
$\bar t=\{1,\ldots,\mu-1,\mu+1,\ldots,n\}$,
then the matricization maps the tensor entries
$\Psi_{(t_1,\ldots,t_n)}$ into a matrix whose row index corresponds to
$t_\mu$ and whose column index corresponds to the multi-index
$(t_1,\ldots,t_{\mu-1},t_{\mu+1},\ldots,t_n)$, i.e.,
$X^{(t)}(\Psi)(t_\mu,\bar t)
=
\Psi_{(t_1,\ldots,t_\mu,\ldots,t_n)}.$ The singular values of $X^{(t)}(\Psi)$ are denoted by
$\{\sigma^{(t)}_\alpha\}_{\alpha\ge1}$.
The HT rank at
node $t$ is defined by
$
r_t(\Psi) := \operatorname{rank}\!\big(X^{(t)}(\Psi)\big).
$
A rank budget is specified by a tuple
$
\mathbf r := (r_t)_{t\in\mathcal T_{\mathrm{int}}},
$
and, when convenient, we write
$
r := \max_{t\in\mathcal T_{\mathrm{int}}} r_t
$
for the maximal prescribed nodewise rank. 
For this prescribed rank budget, the corresponding HT approximation class is
$
\mathcal H_{\mathbf r}
:=
\Big\{
\Psi \in \bigotimes_{i=1}^{n}\mathbb C^d
:\;
\operatorname{rank}\!\big(X^{(t)}(\Psi)\big)\le r_t
\quad \text{for all } t\in \mathcal T_{\mathrm{int}}
\Big\}.
$
The set $\mathcal H_{\mathbf r}$ is nonconvex, but it provides an efficient representation of multidimensional low-rank structure through a hierarchy of transfer tensors (HT cores). For detailed background on HT decompositions and parametrizations, see~\cite{grasedyck2010hierarchical,Hackbusch2019TensorBook}.

\subsection{Fixed-Rank Truncation and Spectral Decay Assumptions}
\label{subsec:HT_truncation_decay}

Given $\Psi$, one may consider the best approximation problem $\min_{\Phi\in\mathcal H_{\mathbf r}}\|\Psi-\Phi\|.$ Computing a globally optimal low-rank tensor approximation is generally difficult; for several tensor formats, such problems are NP-hard~\cite{HillarLim2013}. In the hierarchical Tucker format, rank reduction is performed in practice viahierarchical singular value decomposition (HSVD) truncation~\cite{grasedyck2010hierarchical,Hackbusch2019TensorBook}. Starting from a full or HT representation of $\Psi$, one orthogonalizes along the dimension tree and truncates the nodewise singular value decompositions to the prescribed ranks, yielding a practical rank-truncation operator $\widetilde{\Pi}_{\mathbf r}(\Psi)$.
Define the truncation residual by $\varepsilon_{\mathbf r}(\Psi) := \|\Psi-\widetilde{\Pi}_{\mathbf r}(\Psi)\|.$
Although HSVD truncation does not in general produce the globally optimal approximation in $\mathcal H_{\mathbf r}$, it satisfies a 
bound of the form
\begin{equation}\varepsilon_{\mathbf r}(\Psi)^2\;\le\;\sum_{t\in\mathcal T_{\mathrm{int}}}\;\sum_{\alpha>r_t}(\sigma^{(t)}_\alpha)^2 \label{eq:HT_global_tail_bound}\end{equation}

\paragraph*{Hierarchical spectral decay and compressibility}
Low-rank approximability in the HT format is governed by the decay of the
nodewise singular values. Such decay controls the truncation residual
in~\eqref{eq:HT_global_tail_bound} and therefore determines how accurately a
state tensor can be represented under a prescribed hierarchical rank budget.
We introduce two related assumptions: a static compressibility condition for an
individual tensor, and a trajectory-level condition ensuring that the same
compressibility persists along the closed-loop evolution. Such compressibility assumptions are consistent with the behavior of many-body
quantum systems exhibiting limited entanglement growth under local Hamiltonian
evolution. In particular, the effectiveness of tensor-network methods for
quantum lattice models is closely tied to the rapid decay of Schmidt spectra
across physically relevant bipartitions; see, e.g.,~\cite{EisertCramerPlenio2010}.

\begin{assumption}[Exponential hierarchical spectral decay]
\label{ass:HT_decay}
There exist constants $C_1,c>0$ such that for every internal node
$t\in\mathcal T_{\mathrm{int}}$ and all $\alpha\ge 1$,
$
\sigma^{(t)}_{\alpha}(\Psi) \le C_1 e^{-c\alpha}.
$
\end{assumption}


\begin{proposition}[Static HT truncation bound]\label{prop:HT_static_truncation} Under Assumption~\ref{ass:HT_decay}, in the uniform-rank case $r_t\equiv r$, there exist constants $C_2,c'>0$ such that $\varepsilon_{\mathbf r}(\Psi)\le C_2 e^{-c'r},$ where the constants may depend on the dimension tree $\mathcal T$ and the decay constants in Assumption~\ref{ass:HT_decay}.\end{proposition}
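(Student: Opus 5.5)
The argument combines the HSVD residual bound \eqref{eq:HT_global_tail_bound} with the nodewise decay of Assumption~\ref{ass:HT_decay}, and then sums a geometric series at each internal node. Setting $r_t\equiv r$ in \eqref{eq:HT_global_tail_bound} gives
\begin{equation*}
\varepsilon_{\mathbf r}(\Psi)^2 \le \sum_{t\in\mathcal T_{\mathrm{int}}}\sum_{\alpha>r}\big(\sigma^{(t)}_\alpha(\Psi)\big)^2 .
\end{equation*}
The plan is to bound the inner tail uniformly in $t$ and then multiply by $|\mathcal T_{\mathrm{int}}|$.

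For a fixed node $t$, Assumption~\ref{ass:HT_decay} gives $(\sigma^{(t)}_\alpha)^2\le C_1^2 e^{-2c\alpha}$. The tail is then a geometric series:
\begin{equation*}
\sum_{\alpha>r}\big(\sigma^{(t)}_\alpha\big)^2 \le C_1^2\sum_{\alpha=r+1}^{\infty} e^{-2c\alpha} = C_1^2\,\frac{e^{-2c(r+1)}}{1-e^{-2c}} .
\end{equation*}
If $X^{(t)}(\Psi)$ has only finitely many singular values, the sum is shorter and the bound only improves. The right-hand side does not depend on $t$.

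Summing over the internal nodes gives
\begin{equation*}
\varepsilon_{\mathbf r}(\Psi)^2\le |\mathcal T_{\mathrm{int}}|\,C_1^2\,\frac{e^{-2c}}{1-e^{-2c}}\,e^{-2cr}.
\end{equation*}
Taking square roots yields the claim with $c'=c$ and
\begin{equation*}
C_2 = C_1\,\sqrt{|\mathcal T_{\mathrm{int}}|}\,\frac{e^{-c}}{\sqrt{1-e^{-2c}}} .
\end{equation*}
This $C_2$ depends only on the decay constants $C_1,c$ and on the tree through $|\mathcal T_{\mathrm{int}}|$. For a binary dimension tree $|\mathcal T_{\mathrm{int}}|=O(n)$, so the factor is polynomial in $n$, which is consistent with the paper's notion of dimension-independence.

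There is no real obstacle here. The only point to handle carefully is the meaning of the constants: the residual bound \eqref{eq:HT_global_tail_bound} is taken as given from the HSVD analysis, and the decay constants are assumed uniform over nodes. Under that uniformity no rescaling of $c$ is needed. If one instead wanted $C_2$ to avoid the $\sqrt{|\mathcal T_{\mathrm{int}}|}$ factor, one could absorb it by choosing a slightly smaller $c'<c$ for $r$ large; I would not pursue this, since the statement allows dependence on $\mathcal T$.
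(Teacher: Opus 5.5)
Your proof is correct and takes the same route as the paper, whose proof simply cites the HSVD tail bound \eqref{eq:HT_global_tail_bound} together with the geometric decay of Assumption~\ref{ass:HT_decay}. You carry out the geometric-series summation explicitly, which makes the constants concrete: $c'=c$ and $C_2 = C_1\sqrt{|\mathcal T_{\mathrm{int}}|}\,e^{-c}/\sqrt{1-e^{-2c}}$.
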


\begin{proof} This follows immediately from ~\eqref{eq:HT_global_tail_bound} and the geometric decay in Assumption~\ref{ass:HT_decay}.\end{proof}

We denote by $\Psi_k^{\,r}$ the rank-truncated state at step $k$, and set $u_k=\kappa(\Psi_k^{\,r})$.

\begin{assumption}(Uniform trajectory compressibility
)
\label{ass:HT_uniform}
There exist constants $C_1,c>0$, independent of $k$ and $t$, such that for every
time step, the post-step pre-truncation state
$
Z_k := \mathcal F_{\Delta t}(\Psi_k^{\,r},u_k)
$
satisfies
$
\sigma^{(t)}_{\alpha}(Z_k) \le C_1 e^{-c\alpha}
$
for every internal node $t\in\mathcal T_{\mathrm{int}}$ and all $\alpha\ge 1$. 
\end{assumption}


\begin{proposition}[Uniform trajectory-level truncation bound]
\label{cor:HT_uniform_truncation}
Under Assumption~\ref{ass:HT_uniform}, the pre-truncation states $Z_k$ satisfy
$
\|Z_k-\widetilde{\Pi}_{\mathbf r}(Z_k)\|
\le \bar\varepsilon_{\mathbf r}
$
uniformly in $k$, where in the uniform-rank case $r_t\equiv r$ one has
$
\bar\varepsilon_{\mathbf r}\le C_2 e^{-c' r}
$
for constants $C_2,c'>0$ independent of $k$.
\end{proposition}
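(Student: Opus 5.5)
The plan is to apply the static truncation bound pointwise in $k$ and then check that nothing in the resulting constant depends on $k$. Fix $k$ and set $\Psi=Z_k$ in the HSVD residual bound~\eqref{eq:HT_global_tail_bound}. This gives
\[
\|Z_k-\widetilde{\Pi}_{\mathbf r}(Z_k)\|^2\le \sum_{t\in\mathcal T_{\mathrm{int}}}\sum_{\alpha>r}\big(\sigma^{(t)}_\alpha(Z_k)\big)^2 .
\]
Assumption~\ref{ass:HT_uniform} says that $Z_k$ satisfies the decay hypothesis of Assumption~\ref{ass:HT_decay}, with constants $C_1,c$ that do not depend on $k$ or $t$. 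So we are exactly in the setting of Proposition~\ref{prop:HT_static_truncation} with $\Psi=Z_k$. It remains to make the constants explicit enough to see that they are $k$-independent.

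The key computation is a geometric tail sum. For each internal node,
\[
\sum_{\alpha>r}C_1^2e^{-2c\alpha}=C_1^2\,\frac{e^{-2c(r+1)}}{1-e^{-2c}}\le \frac{C_1^2}{1-e^{-2c}}\,e^{-2cr}.
\]
Summing over the $|\mathcal T_{\mathrm{int}}|$ internal nodes and taking square roots, I would set
\[
\bar\varepsilon_{\mathbf r}:=\sup_k\|Z_k-\widetilde{\Pi}_{\mathbf r}(Z_k)\|\le C_2e^{-c'r},\qquad C_2:=C_1\sqrt{\frac{|\mathcal T_{\mathrm{int}}|}{1-e^{-2c}}},\quad c':=c.
\]
These constants depend only on $C_1$, $c$ and the tree $\mathcal T$. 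For a balanced binary tree on $n$ leaves, $|\mathcal T_{\mathrm{int}}|=n-1$ (or a comparable count, depending on convention), which is polynomial in $n$. This matches the paper's convention of ``dimension-independent.''

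The only point needing care, rather than a real obstacle, is the uniformity in $k$. The HSVD bound~\eqref{eq:HT_global_tail_bound} is deterministic and holds for every input tensor. The singular values of $Z_k$ are bounded by the $k$-independent envelope $C_1e^{-c\alpha}$. Hence the bound above holds for every $k$ simultaneously, the supremum is finite, and it is bounded by the same expression. If a general rank tuple is used instead of the uniform case, the same argument with $r$ replaced by $\min_t r_t$ gives the analogous bound.
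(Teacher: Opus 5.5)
Your proposal is correct and matches the paper's intended argument. The paper gives no separate proof here: it applies the static bound of Proposition~\ref{prop:HT_static_truncation} (the tail bound \eqref{eq:HT_global_tail_bound} plus geometric decay) to each $Z_k$, using the $k$-independent constants from Assumption~\ref{ass:HT_uniform}, and your explicit choice $C_2=C_1\sqrt{|\mathcal T_{\mathrm{int}}|/(1-e^{-2c})}$, $c'=c$ simply spells out that ``immediate'' step.
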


It is also instructive to consider the case where the hierarchical
singular values decay algebraically rather than exponentially.
Suppose that
$\sigma^{(t)}_\alpha = O(\alpha^{-\beta})$
for some $\beta>\tfrac12$.
Then the nodewise tail estimate~\eqref{eq:HT_global_tail_bound}
implies
$
\sum_{\alpha>r}(\sigma^{(t)}_\alpha)^2
=
O\!\left(r^{-(2\beta-1)}\right)$. 
Consequently, in the uniform-rank setting the truncation error satisfies
$
\varepsilon_r
=
O\!\left(r^{-(\beta-\tfrac12)}\right)
$,
so that the HT approximation error decreases polynomially with the
prescribed rank; see, e.g.,~\cite{Hackbusch2019TensorBook,grasedyck2010hierarchical}.

\section{HT-Truncated Closed-Loop Dynamics}
\label{sec:projected_dynamics}

This section studies the 
surrogate closed loop obtained by applying fixed-rank HT truncation after each discrete-time Schr\"odinger update. We first isolate the truncated closed-loop map and quantify the size of the truncation perturbation. We then combine this perturbation bound with the nominal contraction hypothesis to derive practical stability and rank--accuracy estimates for the HT-truncated dynamics.

\subsection{Truncated Closed-Loop Map and Perturbation Bounds
}
\label{subsec:truncated_system_bounds}

We restrict the discrete-time Schr\"odinger evolution to the fixed-rank
HT approximation class $\mathcal H_{\mathbf r}$ introduced in
Section~\ref{sec:tensor_manifold}. Given the discrete flow
$\mathcal F_{\Delta t}$ from~\eqref{eq:discrete_flow} and the feedback law
$\kappa$, the truncated closed loop is defined by
\begin{equation}
\Psi_{k+1}^{\,r}
=
\widetilde{\Pi}_{\mathbf r}\!\big(
\mathcal F_{\Delta t}(\Psi_k^{\,r},\kappa(\Psi_k^{\,r}))
\big).
\label{eq:truncated_closed_loop}
\end{equation}
For analysis, define the post-step pre-truncation state
$
Z_k := \mathcal F_{\Delta t}(\Psi_k^{\,r},\kappa(\Psi_k^{\,r})),
$
and the truncation perturbation
$
e_k := \widetilde{\Pi}_{\mathbf r}(Z_k)-Z_k.
$
Then~\eqref{eq:truncated_closed_loop} can be written as
\begin{equation}
\Psi_{k+1}^{\,r}
=
\mathcal F_{\Delta t}(\Psi_k^{\,r},\kappa(\Psi_k^{\,r})) + e_k,
\label{eq:projected_dynamics}
\end{equation}
so the HT truncation enters the surrogate closed loop as an additive perturbation.
Under Assumption~\ref{ass:HT_uniform}, the magnitude of this perturbation is
uniformly controlled by the HSVD truncation error bound from
Section~\ref{sec:tensor_manifold}.

In practice, each 
surrogate step consists of three operations: evaluation of the feedback law $u_k=\kappa(\Psi_k^{\,r})$, propagation of the current HT state through the one-step map $\mathcal F_{\Delta t}$, and fixed-rank HSVD truncation of the resulting tensor. The analysis below concerns the resulting truncated map \eqref{eq:projected_dynamics} and does not depend on a particular implementation of the propagator, provided the assumptions of Section~\ref{sec:model} hold.

We next quantify the size of the truncation perturbation introduced at each step
of the truncated closed loop. 



\begin{proposition}[Uniform exponential truncation bound]
\label{prop:uniform_truncation_step}
Suppose Assumption~\ref{ass:HT_uniform} holds. Then there exist constants
$C_2,c'>0$ and a quantity $\bar\varepsilon_{\mathbf r}>0$, all independent of
the time index $k$, such that
$
\|e_k\|\le \bar\varepsilon_{\mathbf r},
$
$
\bar\varepsilon_{\mathbf r}\le C_2 e^{-c'r}.
$
\end{proposition}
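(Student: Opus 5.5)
The plan is to reduce the statement to the HSVD tail bound~\eqref{eq:HT_global_tail_bound} applied to the pre-truncation states $Z_k$, and then sum a geometric tail. Since $e_k=\widetilde{\Pi}_{\mathbf r}(Z_k)-Z_k$, we have $\|e_k\|=\varepsilon_{\mathbf r}(Z_k)$ by definition of the truncation residual. Thus the claim is exactly Proposition~\ref{cor:HT_uniform_truncation}. To make the constants explicit and visibly independent of $k$, I would nevertheless rederive it.

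\textbf{Step 1.} Fix $k$ and apply~\eqref{eq:HT_global_tail_bound} to $\Psi=Z_k$, with $r_t\equiv r$:
\[
\|e_k\|^2\le\sum_{t\in\mathcal T_{\mathrm{int}}}\sum_{\alpha>r}\big(\sigma^{(t)}_\alpha(Z_k)\big)^2 .
\]

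\textbf{Step 2.} Insert Assumption~\ref{ass:HT_uniform}, namely $\sigma^{(t)}_\alpha(Z_k)\le C_1e^{-c\alpha}$, with $C_1,c$ independent of $k$ and $t$. Summing the geometric series gives
\[
\sum_{\alpha>r}C_1^2e^{-2c\alpha}=\frac{C_1^2e^{-2c(r+1)}}{1-e^{-2c}} .
\]

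\textbf{Step 3.} Multiply by $|\mathcal T_{\mathrm{int}}|$, which equals $n-1$ for a binary dimension tree on $n$ leaves, and take square roots. This yields
\[
\|e_k\|\le C_2e^{-c'r},\qquad c':=c,\qquad C_2:=C_1e^{-c}\Big(\frac{|\mathcal T_{\mathrm{int}}|}{1-e^{-2c}}\Big)^{1/2}.
\]
Define $\bar\varepsilon_{\mathbf r}:=\sup_k\|e_k\|$. This supremum is finite, bounded by $C_2e^{-c'r}$, and independent of $k$ by construction. If the statement's requirement $\bar\varepsilon_{\mathbf r}>0$ is to be met even in the degenerate case where every $Z_k$ is already exactly rank-$r$, I would simply take $\bar\varepsilon_{\mathbf r}:=C_2e^{-c'r}$ instead.

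There is no real obstacle here; the only points requiring care are bookkeeping. First, the constants must not depend on $k$, which holds because Assumption~\ref{ass:HT_uniform} is uniform in $k$ and the tree is fixed. Second, $C_2$ grows only like $\sqrt{n}$ through $|\mathcal T_{\mathrm{int}}|$, which is consistent with the paper's notion of ``dimension-independent'' (no exponential dependence on $N$). Third, the bound~\eqref{eq:HT_global_tail_bound} for the practical HSVD operator is taken as given from the earlier section.
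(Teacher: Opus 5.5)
Your proposal is correct and follows the paper's proof, which likewise notes that $\|e_k\|=\|Z_k-\widetilde{\Pi}_{\mathbf r}(Z_k)\|$ and then invokes Proposition~\ref{cor:HT_uniform_truncation}. Your geometric-tail computation only makes explicit the constants ($c'=c$ and the stated $C_2$) that the paper leaves implicit, and, like the paper, it tacitly assumes the uniform-rank case $r_t\equiv r$.
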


\begin{proof}
Under Assumption~\ref{ass:HT_uniform}, Proposition~\ref{cor:HT_uniform_truncation}
shows that the HSVD truncation error of every pre-truncation state
$Z_k$ satisfies
$
\|Z_k-\widetilde{\Pi}_{\mathbf r}(Z_k)\|
\le
\bar\varepsilon_{\mathbf r},$
$
\bar\varepsilon_{\mathbf r}\le C_2 e^{-c'r},
$
uniformly in $k$. Since
$
e_k:=\widetilde{\Pi}_{\mathbf r}(Z_k)-Z_k,
$
it follows immediately that
$
\|e_k\|=\|Z_k-\widetilde{\Pi}_{\mathbf r}(Z_k)\|
\le
\bar\varepsilon_{\mathbf r},
$
and therefore
$
\|e_k\|\le C_2 e^{-c'r}.
$
\end{proof}

The constants in Proposition~\ref{prop:uniform_truncation_step} depend on the
HT dimension tree and the spectral decay parameters, but they are uniform in
time. In particular, for a fixed tree structure they do not exhibit exponential
dependence on the ambient Hilbert-space dimension $N=d^n$. Consequently, the
truncation perturbation can be made uniformly small along the surrogate closed-loop
trajectory by increasing the prescribed rank budget.

\subsection{Practical Stability and Rank--Accuracy Relations}
\label{subsec:practical_stability_rank_accuracy}

We now combine the uniform truncation bound with the nominal contraction
hypothesis from Assumption~\ref{ass:incremental_contraction}. The basic idea is
that the HT-truncated dynamics can be viewed as the nominal closed loop subject
to a bounded additive perturbation, whose size is controlled by the truncation
rank.

\begin{theorem}(Practical exponential stability of the HT-truncated closed loop)
\label{thm:practical_stability}
Assume Assumption~\ref{ass:incremental_contraction} holds for the nominal
closed loop with contraction factor $\rho\in(0,1)$ in the metric
$\operatorname{dist}_\star$. Let $\mathcal O\subset\mathbb C^N$ be a local
operating region containing the reference set $\mathcal R$, and suppose that
on $\mathcal O$ there exists a constant $M_\star\ge 1$ such that
$\operatorname{dist}_\star(x,y)\le M_\star \|x-y\|$
for all $x,y\in\mathcal O$. Suppose also that the truncation perturbation
satisfies $\|e_k\|\le \bar\varepsilon_{\mathbf r}$ for all $k\ge 0$.
Then, provided the HT-truncated trajectory remains in $\mathcal O$, for all
$k\ge 0$ one has
$\operatorname{dist}_\star(\Psi_k^{\,r},\mathcal R)
\le
\rho^k\,\operatorname{dist}_\star(\Psi_0^{\,r},\mathcal R)
+
\frac{M_\star\bar\varepsilon_{\mathbf r}}{1-\rho}$.
Moreover,
$\limsup_{k\to\infty}\operatorname{dist}_\star(\Psi_k^{\,r},\mathcal R)
\le
\frac{M_\star\bar\varepsilon_{\mathbf r}}{1-\rho}$.
\end{theorem}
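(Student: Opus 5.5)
The plan is a one-step perturbation estimate followed by a discrete Grönwall (geometric-series) induction.

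\emph{One-step estimate.} Fix $k\ge 0$ and write $Z_k=\mathcal F_{\Delta t}(\Psi_k^{\,r},\kappa(\Psi_k^{\,r}))$, so that $\Psi_{k+1}^{\,r}=Z_k+e_k$ by \eqref{eq:projected_dynamics}. The point-to-set distance $\operatorname{dist}_\star(\cdot,\mathcal R)=\inf_{y\in\mathcal R}\operatorname{dist}_\star(\cdot,y)$ inherits the triangle inequality from the metric $\operatorname{dist}_\star$. It is therefore $1$-Lipschitz with respect to $\operatorname{dist}_\star$. This gives
\[
\operatorname{dist}_\star(\Psi_{k+1}^{\,r},\mathcal R)\le \operatorname{dist}_\star(Z_k,\mathcal R)+\operatorname{dist}_\star(Z_k+e_k,Z_k).
\]
The first term is bounded by $\rho\,\operatorname{dist}_\star(\Psi_k^{\,r},\mathcal R)$ via Assumption~\ref{ass:incremental_contraction}, applied at $\Psi=\Psi_k^{\,r}$. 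The second term is bounded by $M_\star\|e_k\|\le M_\star\bar\varepsilon_{\mathbf r}$ via the local Lipschitz comparison on $\mathcal O$, applied to $x=Z_k+e_k=\Psi_{k+1}^{\,r}$ and $y=Z_k$. Combining the two yields the recursion
\[
a_{k+1}\le\rho a_k+M_\star\bar\varepsilon_{\mathbf r},\qquad a_k:=\operatorname{dist}_\star(\Psi_k^{\,r},\mathcal R).
\]

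\emph{Induction.} Unrolling the recursion gives
\[
a_k\le\rho^k a_0+M_\star\bar\varepsilon_{\mathbf r}\sum_{j=0}^{k-1}\rho^j\le \rho^k a_0+\frac{M_\star\bar\varepsilon_{\mathbf r}}{1-\rho},
\]
which is the claimed estimate. The $\limsup$ statement follows because $\rho^k\to 0$ for $\rho\in(0,1)$.

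\emph{Main obstacle: domains of validity.} Both estimates used in the one-step bound hold only locally, and each needs a separate check.
\begin{itemize}
\item The contraction in Assumption~\ref{ass:incremental_contraction} holds only on a neighborhood of $\mathcal R$. I would interpret the standing hypothesis ``the truncated trajectory remains in $\mathcal O$'' as including $\mathcal O\subset$ that neighborhood, shrinking $\mathcal O$ if necessary. This lets the assumption be applied at each $\Psi_k^{\,r}$.
\item The Lipschitz comparison requires both $Z_k$ and $\Psi_{k+1}^{\,r}$ to lie in $\mathcal O$. We know $\Psi_{k+1}^{\,r}\in\mathcal O$ by hypothesis. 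For $Z_k$, I would either strengthen the standing hypothesis to include the pre-truncation states, or shrink the region so that $Z_k$ is automatically inside. The latter works because $Z_k$ is within $\bar\varepsilon_{\mathbf r}$ of $\Psi_{k+1}^{\,r}$, so for small $\bar\varepsilon_{\mathbf r}$ it lies in a slightly enlarged region on which the constant $M_\star$ still applies.
\end{itemize}

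I would state this bookkeeping explicitly, and also remark that phase invariance of $\operatorname{dist}_\star$ causes no difficulty, since only the metric axioms are used. With these points settled, the remainder is the routine geometric-series argument above.
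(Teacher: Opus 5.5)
Your proposal is correct and matches the paper's proof: the triangle inequality for the point-to-set distance, the contraction of Assumption~\ref{ass:incremental_contraction} at $\Psi_k^{\,r}$, and the comparison $\operatorname{dist}_\star(\Psi_{k+1}^{\,r},Z_k)\le M_\star\|e_k\|$ give $a_{k+1}\le\rho a_k+M_\star\bar\varepsilon_{\mathbf r}$, which you then unroll as a geometric series. Your domain bookkeeping is more careful than the paper's, which tacitly assumes both points when it says only that the truncated trajectory remains in $\mathcal O$: you require $\mathcal O$ to lie inside the contraction neighborhood, and you require the pre-truncation states $Z_k$ to lie in $\mathcal O$.
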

\begin{proof}
The truncated iteration satisfies
$
\Psi_{k+1}^{\,r}
=
\mathcal{F}_{\Delta t}(\Psi_k^{\,r},\kappa(\Psi_k^{\,r})) + e_k,
\qquad
\|e_k\|\le \bar\varepsilon_{\mathbf r}.
$
By Assumption~\ref{ass:incremental_contraction},
$
\operatorname{dist}_\star\!\big(
\mathcal{F}_{\Delta t}(\Psi_k^{\,r},\kappa(\Psi_k^{\,r})),\mathcal R
\big)
\le
\rho\,\operatorname{dist}_\star(\Psi_k^{\,r},\mathcal R).
$
Since the HT-truncated trajectory remains in $\mathcal O$, the metric comparison
$\operatorname{dist}_\star(x,y)\le M_\star\|x-y\|$ is valid for the states under
consideration. Using the triangle inequality therefore yields
$
\operatorname{dist}_\star(\Psi_{k+1}^{\,r},\mathcal R)
\le
\rho\,\operatorname{dist}_\star(\Psi_k^{\,r},\mathcal R)
+
M_\star\bar\varepsilon_{\mathbf r}.
$
Iterating this bound gives
$
\operatorname{dist}_\star(\Psi_k^{\,r},\mathcal R)
\le
\rho^k \operatorname{dist}_\star(\Psi_0^{\,r},\mathcal R)
+
\frac{1-\rho^k}{1-\rho}\,M_\star\bar\varepsilon_{\mathbf r},
$
and the stated estimate follows. Taking the limit superior as $k\to\infty$
gives the asymptotic bound.
\end{proof}

\begin{remark}[Robust contraction interpretation]
The HT-truncated closed loop inherits the nominal contraction rate $\rho$ up to
a practical error tube whose radius is proportional to the uniform truncation
level $\bar\varepsilon_{\mathbf r}$. In this sense, the result is a local
input-to-state practical stability estimate with respect to the truncation
perturbation sequence $\{e_k\}$. In the uniform-rank case, the tube radius
decays exponentially with the prescribed rank budget under
Assumption~\ref{ass:HT_uniform}.
\end{remark}

The preceding result shows that, in the uniform-rank case $r_t\equiv r$, the
asymptotic tracking error of the HT-truncated closed loop decays exponentially
with $r$. This immediately yields a logarithmic rank--accuracy trade-off.

\begin{theorem}(Rank--accuracy relation under uniform exponential truncation)
\label{thm:rank_accuracy_uniform}
Suppose the nominal closed loop satisfies the contraction property of
Assumption~\ref{ass:incremental_contraction} with contraction factor
$0<\rho<1$, and let $\mathcal O\subset\mathbb C^N$ be the local operating
region from Theorem~\ref{thm:practical_stability}. Assume that on $\mathcal O$
there exists a constant $M_\star\ge 1$ such that
$\operatorname{dist}_\star(x,y)\le M_\star\|x-y\|$
for all $x,y\in\mathcal O$.
If the truncation perturbations satisfy the uniform exponential bound
$\|e_k\|\le \bar\varepsilon_{\mathbf r}\le C_2 e^{-c'r}$, $k\ge 0$,
as guaranteed by Proposition~\ref{prop:uniform_truncation_step},
then the HT-truncated closed loop satisfies
$\operatorname{dist}_\star(\Psi_k^{\,r},\mathcal R)
\le
\rho^k\,\operatorname{dist}_\star(\Psi_0^{\,r},\mathcal R)
+
\frac{M_\star C_2}{1-\rho}\,e^{-c'r}$,
$k\ge 0$.
In particular, to guarantee the asymptotic tracking tolerance
$\limsup_{k\to\infty}\operatorname{dist}_\star(\Psi_k^{\,r},\mathcal R)\le \eta$,
it is sufficient to choose the rank $r$ so that
$
r
\ge
\frac{1}{c'}\log\!(\frac{M_\star C_2}{(1-\rho)\eta}) =\mathcal O\!(\log(1/\eta)).$
\end{theorem}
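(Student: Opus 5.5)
The proof is essentially a corollary of Theorem~\ref{thm:practical_stability}, so the plan is to invoke that result directly and then substitute the exponential truncation bound. The hypotheses of Theorem~\ref{thm:rank_accuracy_uniform} are exactly those of Theorem~\ref{thm:practical_stability}: the contraction certificate with factor $\rho$, the local operating region $\mathcal O$, the Lipschitz-type comparison $\operatorname{dist}_\star(x,y)\le M_\star\|x-y\|$ on $\mathcal O$, and the uniform perturbation bound $\|e_k\|\le\bar\varepsilon_{\mathbf r}$. I will also carry over the standing proviso that the truncated trajectory remains in $\mathcal O$. Under these hypotheses, Theorem~\ref{thm:practical_stability} gives
\[
\operatorname{dist}_\star(\Psi_k^{\,r},\mathcal R)\le \rho^k\operatorname{dist}_\star(\Psi_0^{\,r},\mathcal R)+\frac{M_\star\bar\varepsilon_{\mathbf r}}{1-\rho}.
\]

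Next I will insert $\bar\varepsilon_{\mathbf r}\le C_2e^{-c'r}$, which is supplied by Proposition~\ref{prop:uniform_truncation_step} under Assumption~\ref{ass:HT_uniform} in the uniform-rank case. Because $M_\star>0$ and $1-\rho>0$, the map $\bar\varepsilon\mapsto M_\star\bar\varepsilon/(1-\rho)$ is increasing, so the substitution preserves the inequality. This yields the displayed finite-time bound for every $k\ge 0$.

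For the tolerance statement, I take $\limsup_{k\to\infty}$. Since $\rho^k\to0$ and $\operatorname{dist}_\star(\Psi_0^{\,r},\mathcal R)$ is finite, the limsup is at most $\frac{M_\star C_2}{1-\rho}e^{-c'r}$. It therefore suffices to require $\frac{M_\star C_2}{1-\rho}e^{-c'r}\le\eta$. Taking logarithms, which is monotone, this is equivalent to
\[
r\ge \frac{1}{c'}\log\!\Big(\frac{M_\star C_2}{(1-\rho)\eta}\Big).
\]
The right-hand side equals $\frac{1}{c'}\log(1/\eta)$ plus a constant independent of $\eta$, so it is $\mathcal O(\log(1/\eta))$ as $\eta\to0$. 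If the logarithm is negative, meaning $\eta$ is already large, any $r\ge1$ works, and the condition holds trivially.

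I expect no real obstacle here. The only delicate point is a bookkeeping one: the constants $C_2$ and $c'$ must not depend on $r$ or $k$, which Proposition~\ref{prop:uniform_truncation_step} guarantees. The trajectory must also stay in $\mathcal O$, and that has to be assumed, exactly as in Theorem~\ref{thm:practical_stability}. One could try to justify this invariance by choosing $r$ large enough that the tube radius fits inside the neighborhood where Assumption~\ref{ass:incremental_contraction} holds. I will instead state it as inherited from the earlier theorem.
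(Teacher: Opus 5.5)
Your proposal is correct and follows the paper's proof: substitute $\bar\varepsilon_{\mathbf r}\le C_2e^{-c'r}$ into the estimate of Theorem~\ref{thm:practical_stability}, take the limit superior, and solve $\frac{M_\star C_2}{1-\rho}e^{-c'r}\le\eta$ for $r$. You are right to state explicitly the proviso that the truncated trajectory stays in $\mathcal O$, which the statement inherits from Theorem~\ref{thm:practical_stability} but leaves implicit.
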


\begin{proof}
Each truncation perturbation satisfies
$\|e_k\|\le \bar\varepsilon_{\mathbf r}\le C_2 e^{-c'r}$.
Substituting this bound into the estimate of Theorem~\ref{thm:practical_stability}
gives
$\operatorname{dist}_\star(\Psi_k^{\,r},\mathcal R)
\le
\rho^k\,\operatorname{dist}_\star(\Psi_0^{\,r},\mathcal R)
+
\frac{M_\star C_2}{1-\rho}\,e^{-c'r}$,
$k\ge 0$.
Taking the limit superior as $k\to\infty$ yields
$
\limsup_{k\to\infty}\operatorname{dist}_\star(\Psi_k^{\,r},\mathcal R)
\le
\frac{M_\star C_2}{1-\rho}\,e^{-c'r}.
$
Imposing
$\frac{M_\star C_2}{1-\rho}e^{-c'r}\le\eta$
and solving for $r$ yields the stated condition.
\end{proof}

Theorem~\ref{thm:rank_accuracy_uniform} shows that, under uniform exponential
compressibility along the trajectory, the required hierarchical rank grows only
logarithmically in the inverse tolerance $1/\eta$. In particular, no
horizon-dependent quantities appear: the uniform exponential truncation bound
ensures that the HT-truncated closed loop inherits practical exponential
stability from the nominal contracting closed loop with a steady-state error
that decays exponentially in~$r$.

\section{Controller Design on the Surrogate HT Model and Transfer to the Full System}


A central question in surrogate-model control is whether a controller defined on a surrogate model can still induce desirable closed-loop behavior on the underlying high-dimensional system. In the present setting, the feedback law is evaluated on the surrogate HT state $\Psi_k^{\,r}\in\mathcal H_{\mathbf r}$ rather than on the full state $\Psi_k\in\mathbb C^N$. The issue addressed in this section is therefore whether closed-loop guarantees established for the surrogate HT model can be transferred to the true full-order Schr\"odinger system.
This question is naturally related to abstraction-based control, in which an abstract model is linked to the concrete system through an error metric controlling the discrepancy between their evolutions; see, for example,~\cite{polaGirardTabuada,girardPappasSurvey}. In our setting, the HT-truncated dynamics provide the surrogate model, while the truncation bounds from Section~\ref{sec:projected_dynamics} quantify one source of mismatch between surrogate and full-order evolution. Because the controller is driven by $\Psi_k^{\,r}$ rather than the true full state $\Psi_k$, the transfer of closed-loop guarantees also requires control of the state-to-controller mismatch. The purpose of this section is to make these requirements explicit.

\paragraph*{Surrogate HT Model and Controller Design}

We consider controller synthesis directly on the surrogate HT approximation class
$\mathcal H_{\mathbf r}$. Let $\kappa_r:\mathcal H_{\mathbf r}\to\mathbb R$
denote a feedback law defined on the surrogate state. The surrogate HT
closed-loop map is
$
\mathcal T_{\mathbf r}(\Psi^{\,r})
:=
\widetilde{\Pi}_{\mathbf r}\!\big(
\mathcal F_{\Delta t}(\Psi^{\,r},\kappa_r(\Psi^{\,r}))
\big),$
$
\Psi_{k+1}^{\,r}=\mathcal T_{\mathbf r}(\Psi_k^{\,r}),$
$
\Psi_0^{\,r}\in\mathcal H_{\mathbf r}.
$
Thus, the surrogate controller is synthesized and analyzed entirely on the
HT-truncated surrogate dynamics.
Under the regularity assumptions, and away from rank-change singularities of
the HT representation, the map $\mathcal T_{\mathbf r}$ is locally well defined
on the operating region. The design objective is to choose $\kappa_r$ so that
the surrogate closed loop satisfies a local contraction certificate with respect
to the target reference set $\mathcal R$, namely,
$
\operatorname{dist}_\star\!\big(
\mathcal T_{\mathbf r}(\Psi^{\,r}),\mathcal R
\big)
\le
\rho\,\operatorname{dist}_\star(\Psi^{\,r},\mathcal R),
$
$
\rho\in(0,1),
$
for all surrogate states $\Psi^{\,r}$ in the operating region.

\paragraph*{Transfer of the surrogate controller to the Full Plant}

At run time, the surrogate controller computes the control input from the surrogate
state $\Psi_k^{\,r}\in\mathcal H_{\mathbf r}$ according to
$
u_k=\kappa_r(\Psi_k^{\,r}),
$
while the true full-order plant evolves according to
$
\Psi_{k+1}
=
\mathcal F_{\Delta t}(\Psi_k,u_k)
=
\mathcal F_{\Delta t}\big(\Psi_k,\kappa_r(\Psi_k^{\,r})\big).
$
In contrast, the surrogate model used for controller synthesis evolves as
$
\Psi_{k+1}^{\,r}
=
\mathcal T_{\mathbf r}(\Psi_k^{\,r})
=
\widetilde{\Pi}_{\mathbf r}\!\big(
\mathcal F_{\Delta t}(\Psi_k^{\,r},\kappa_r(\Psi_k^{\,r}))
\big).
$
Hence the plant--surrogate discrepancy is not due solely to HT truncation.
It also includes the mismatch between the true full state $\Psi_k$ and the
surrogate state $\Psi_k^{\,r}$ used to evaluate the controller. To capture both
effects simultaneously, we introduce the one-step transfer error
$
e_k^{\mathrm{full}}
:=
\mathcal F_{\Delta t}\big(\Psi_k,\kappa_r(\Psi_k^{\,r})\big)
-
\mathcal T_{\mathbf r}(\Psi_k^{\,r}).
$
A Surrogate–Plant transfer theorem therefore requires an explicit bound on
$\|e_k^{\mathrm{full}}\|$ (or on $\operatorname{dist}_\star$ applied to this
difference), rather than only the truncation residual of the surrogate model.
The following theorem formulates such a result.

\begin{theorem}(Surrogate–Plant transfer under one-step surrogate error)
\label{thm:surrogate_to_true}
Let
$
\mathcal T_{\mathbf r}(\Psi^{\,r})
=
\widetilde{\Pi}_{\mathbf r}\!\big(
\mathcal F_{\Delta t}(\Psi^{\,r},\kappa_r(\Psi^{\,r}))
\big)
$
be the surrogate HT closed‑loop map. Assume that the surrogate closed
loop admits a contraction certificate in the metric $\operatorname{dist}_\star$,
i.e., there exist a closed forward-invariant reference set $\mathcal R$ and a
contraction factor $\rho\in(0,1)$ such that
$
\operatorname{dist}_\star\!\big(\mathcal T_{\mathbf r}(\Psi^{\,r}),\mathcal R\big)
\le
\rho\,\operatorname{dist}_\star(\Psi^{\,r},\mathcal R)
$
for all $\Psi^{\,r}$ in 
a local operating region $\mathcal O$ containing the reference set $\mathcal R$.
Assume also that along the coupled
plant/surrogate evolution the one-step mismatch satisfies
$
\operatorname{dist}_\star\!\Big(
\mathcal F_{\Delta t}\big(\Psi_k,\kappa_r(\Psi_k^{\,r})\big),
\mathcal T_{\mathbf r}(\Psi_k^{\,r})
\Big)
\le
\delta_{\mathbf r},$
$k\ge 0,
$
for some rank-dependent quantity $\delta_{\mathbf r}\ge 0$, and that the surrogate
and full states satisfy the consistency bound
$
\operatorname{dist}_\star(\Psi_k^{\,r},\mathcal R)
\le
\operatorname{dist}_\star(\Psi_k,\mathcal R)+\delta_{\mathbf r}^{(0)},$ 
$k\ge 0,$ 
for some mismatch constant $\delta_{\mathbf r}^{(0)}\ge 0$.
Then, provided the plant and surrogate trajectories remain in the operating
region, 
the true plant trajectory generated by
$
\Psi_{k+1}=\mathcal F_{\Delta t}\big(\Psi_k,\kappa_r(\Psi_k^{\,r})\big)
$
satisfies
$
\operatorname{dist}_\star(\Psi_k,\mathcal R)
\le
\rho^k\,\operatorname{dist}_\star(\Psi_0,\mathcal R)
+
\frac{1-\rho^k}{1-\rho}\big(\rho\,\delta_{\mathbf r}^{(0)}+\delta_{\mathbf r}\big),
$
$k\ge 0.
$
In particular,
$
\limsup_{k\to\infty}\operatorname{dist}_\star(\Psi_k,\mathcal R)
\le
\frac{\rho\,\delta_{\mathbf r}^{(0)}+\delta_{\mathbf r}}{1-\rho}.
$
If both $\delta_{\mathbf r}$ and $\delta_{\mathbf r}^{(0)}$ are of order
$O(e^{-c'r})$, then the full plant inherits practical exponential attraction to
$\mathcal R$ with a tube radius of order $O(e^{-c'r})$.
\end{theorem}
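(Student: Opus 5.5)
We will derive a one-step recursion for $D_k:=\operatorname{dist}_\star(\Psi_k,\mathcal R)$ and then iterate it, as in the proof of Theorem~\ref{thm:practical_stability}.

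\textbf{Step 1: one-step recursion.} Fix $k$ with $\Psi_k,\Psi_k^{\,r}\in\mathcal O$. Since $\operatorname{dist}_\star$ is a metric on the quotient space, the point-to-set distance $x\mapsto\operatorname{dist}_\star(x,\mathcal R)$ is $1$-Lipschitz: for any $a,b$ and any $y\in\mathcal R$ we have $\operatorname{dist}_\star(a,y)\le\operatorname{dist}_\star(a,b)+\operatorname{dist}_\star(b,y)$, and taking the infimum over $y$ gives the claim. Apply this with $a=\Psi_{k+1}=\mathcal F_{\Delta t}(\Psi_k,\kappa_r(\Psi_k^{\,r}))$ and $b=\mathcal T_{\mathbf r}(\Psi_k^{\,r})$. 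The one-step mismatch hypothesis then gives
\[
D_{k+1}\le \operatorname{dist}_\star\big(\mathcal T_{\mathbf r}(\Psi_k^{\,r}),\mathcal R\big)+\delta_{\mathbf r}.
\]
The surrogate contraction certificate bounds the first term by $\rho\,\operatorname{dist}_\star(\Psi_k^{\,r},\mathcal R)$. The consistency bound in turn gives $\operatorname{dist}_\star(\Psi_k^{\,r},\mathcal R)\le D_k+\delta_{\mathbf r}^{(0)}$. Combining these,
\[
D_{k+1}\le \rho D_k+\big(\rho\,\delta_{\mathbf r}^{(0)}+\delta_{\mathbf r}\big).
\]

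\textbf{Step 2: iterate.} Unrolling this affine recursion, by induction on $k$ or by summing the geometric series, gives
\[
D_k\le\rho^kD_0+\sum_{j=0}^{k-1}\rho^j\big(\rho\,\delta_{\mathbf r}^{(0)}+\delta_{\mathbf r}\big)=\rho^kD_0+\frac{1-\rho^k}{1-\rho}\big(\rho\,\delta_{\mathbf r}^{(0)}+\delta_{\mathbf r}\big),
\]
which is the stated estimate. Since $\rho^k\to0$, taking the $\limsup$ yields the asymptotic tube radius $(\rho\delta_{\mathbf r}^{(0)}+\delta_{\mathbf r})/(1-\rho)$. If $\delta_{\mathbf r},\delta_{\mathbf r}^{(0)}\le Ce^{-c'r}$, this radius is at most $\frac{(1+\rho)C}{1-\rho}e^{-c'r}=O(e^{-c'r})$, giving the final claim.

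\textbf{Main obstacle.} The only delicate point is Step 1, where every hypothesis must be applicable at each step. The contraction certificate is needed at $\Psi_k^{\,r}$, and the triangle inequality must hold for the metric on phase classes. The standing proviso that both trajectories remain in $\mathcal O$ supplies the first requirement. The second requires confirming that $\operatorname{dist}_\star$ is a genuine metric on the quotient, as assumed in Section~\ref{subsec:discrete_time_nominal}. For the phase-minimized distance, well-definedness on phase classes follows from the invariance $\|\psi-e^{i\theta}\phi\|=\|e^{-i\theta}\psi-\phi\|$. With these checks in place, the rest is the same geometric-series argument already used for Theorem~\ref{thm:practical_stability}.
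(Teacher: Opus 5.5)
Your proposal is correct and follows the paper's proof step for step. You pass the triangle inequality through the surrogate successor $\mathcal T_{\mathbf r}(\Psi_k^{\,r})$ using the one-step mismatch bound, then apply the surrogate contraction and the consistency bound to get the affine recursion $D_{k+1}\le\rho D_k+\rho\,\delta_{\mathbf r}^{(0)}+\delta_{\mathbf r}$, and iterate it as a geometric series; your explicit $1$-Lipschitz argument for the point-to-set distance and the constant $(1+\rho)C/(1-\rho)$ in the final claim are details the paper leaves implicit.
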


\begin{proof}
Let
$
\Psi_{k+1}^{(\mathrm{surr})}=\mathcal T_{\mathbf r}(\Psi_k^{\,r}).
$
By the one-step mismatch assumption,
$
\operatorname{dist}_\star(\Psi_{k+1},\Psi_{k+1}^{(\mathrm{surr})})\le \delta_{\mathbf r}.$
By surrogate contraction,
$
\operatorname{dist}_\star(\Psi_{k+1}^{(\mathrm{surr})},\mathcal R)
\le
\rho\,\operatorname{dist}_\star(\Psi_k^{\,r},\mathcal R).
$
Hence, by the triangle inequality,
$
\operatorname{dist}_\star(\Psi_{k+1},\mathcal R)
\le
\rho\,\operatorname{dist}_\star(\Psi_k^{\,r},\mathcal R)+\delta_{\mathbf r}.
$
Using the consistency assumption gives
$
\operatorname{dist}_\star(\Psi_{k+1},\mathcal R)
\le
\rho\,\operatorname{dist}_\star(\Psi_k,\mathcal R)
+
\rho\,\delta_{\mathbf r}^{(0)}
+
\delta_{\mathbf r}.
$
Iterating this affine recursion yields
$
\operatorname{dist}_\star(\Psi_k,\mathcal R)
\le
\rho^k\,\operatorname{dist}_\star(\Psi_0,\mathcal R)
+
\frac{1-\rho^k}{1-\rho}\big(\rho\,\delta_{\mathbf r}^{(0)}+\delta_{\mathbf r}\big),
$
which proves the stated estimate. Taking the limit superior as $k\to\infty$
gives the asymptotic bound.
\end{proof}

\begin{remark}
The quantity $\delta_{\mathbf r}$ combines the truncation error of the HT‑truncated surrogate model with the controller-state mismatch induced by evaluating $\kappa_r$ on
$\Psi_k^{\,r}$ rather than on the true full state $\Psi_k$. Deriving explicit
bounds on $\delta_{\mathbf r}$ requires additional interface assumptions, such as
Lipschitz continuity of $\kappa_r$ and local consistency estimates between the
plant and surrogate states.
\end{remark}


\section{Numerical Validation}\label{sec:numerics}
We validate the framework on a \(4\times 4\) spin-\(\tfrac12\) lattice (Hilbert dimension \(2^{16}\)). The drift Hamiltonian is the nearest-neighbor Heisenberg model
\(
H_0 = J\!\sum_{\langle p,q\rangle}\big(\sigma_p^x\sigma_q^x+\sigma_p^y\sigma_q^y+\sigma_p^z\sigma_q^z\big),
\)
with coupling \(J=0.25\). The control acts on the two top-row sites,
\(
H_1=\tfrac12\big(\sigma_x^{(0,0)}+\sigma_x^{(0,1)}\big),
\)
and the sampling step is \(\Delta t=0.02\). The target is \(\phi_{\mathrm{tar}}=|1\rangle^{\otimes 16}\).
The sampled-data feedback is evaluated on the surrogate state as
$u_k=\gamma\,\mathrm{Im}\!\Big(\langle \phi_{\mathrm{tar}},H_1\Psi_k^{r}\rangle\;
\langle \Psi_k^{r},\phi_{\mathrm{tar}}\rangle\Big)$, 
$\gamma=3.0$, $|u_k|\le 3$,
and distances are measured with the phase-invariant metric \(\mathrm{dist}_\star\).
The discrete map \(\mathcal{F}_{\Delta t}\) is realized by a structure-preserving second-order Strang splitting built from the two-site interactions and the one-site control, consistent with the locality/unitarity setting in Section~II. The HT surrogate is updated as
\(
\Psi_{k+1}^{r}=\Pi_r\,\mathcal{F}_{\Delta t}\!\big(\Psi_k^{r},\kappa(\Psi_k^{r})\big),
\)
while the full plant evolves under the same surrogate-driven control,
\(
\Psi_{k+1}=\mathcal{F}_{\Delta t}\!\big(\Psi_k,\kappa(\Psi_k^{r})\big),
\)
matching the surrogate--plant transfer setup.
We simulate for \(K=220\) steps and consider ranks \(r\in\{2,4,8,12,16,24,32,64\}\).

\begin{figure}[t]
  \centering
    \includegraphics[width=\linewidth]{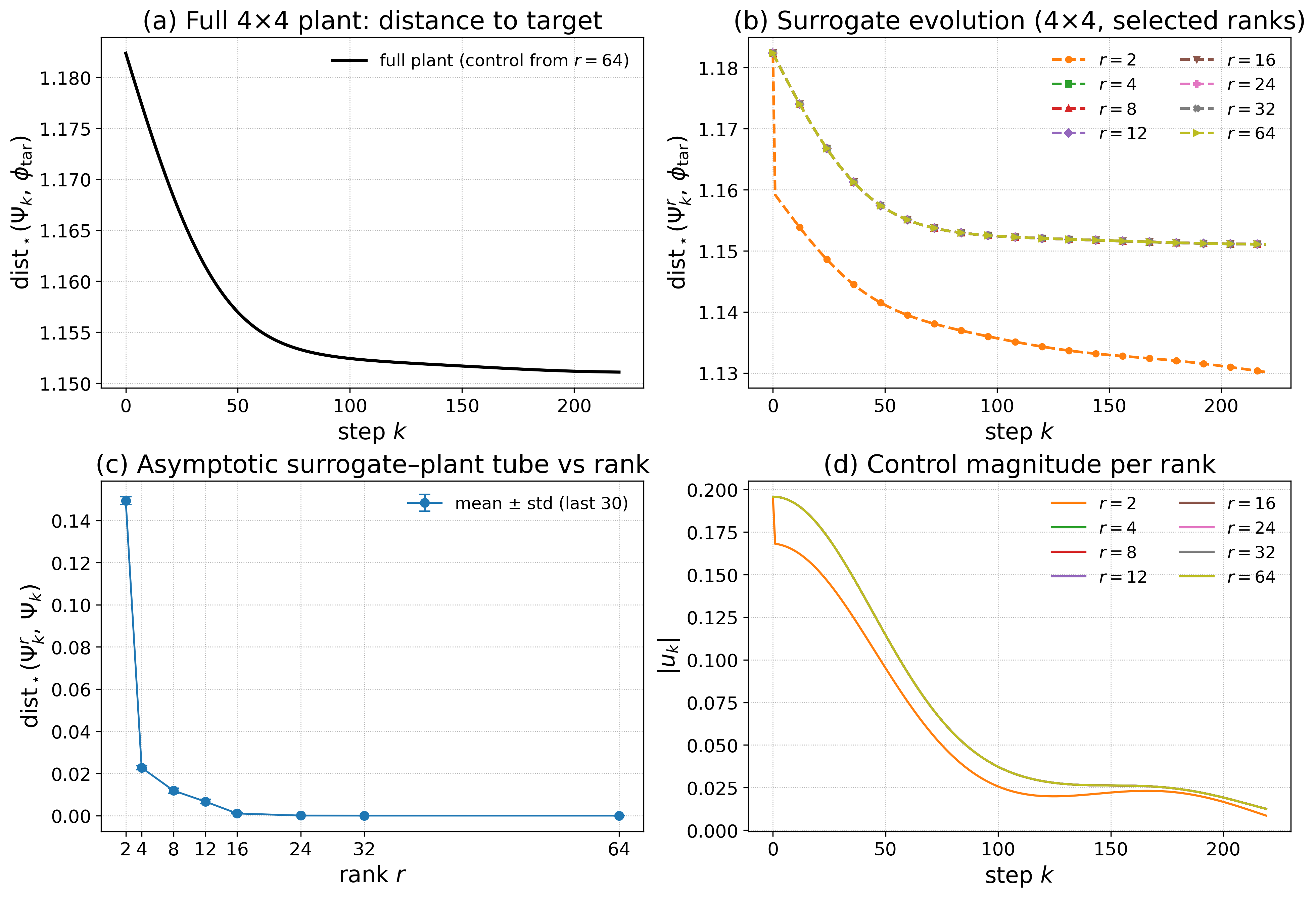} 
  \caption{\small Full-plant convergence and HT–truncated surrogate accuracy (4\(\times\)4 lattice).
  }
  \label{fig}
\end{figure}

Figure~\ref{fig}(a) shows the full plant converging toward the target behavior and settling into a small neighborhood, while the control magnitude decays smoothly from \(|u_0|\!\approx\!0.20\) to \(\approx\!0.02\) (Fig.~\ref{fig}(d)). %
In Fig.~\ref{fig}(b), the surrogate curves display the predicted rank dependence: smaller \(r\) yields larger steady distances, and beyond \(r\approx 8\) the curves nearly coincide, indicating that the truncation ceases to influence the evolution in this regime.
The surrogate--plant \emph{tube} in Fig.~\ref{fig}(c) is the tail average of the gap
\(\mathrm{dist}_\star(\Psi_k^{r},\Psi_k)\), i.e.,
\(\mathrm{Tube}(r)=\tfrac{1}{W}\sum_{k=K-W+1}^{K}\mathrm{dist}_\star(\Psi_k^{r},\Psi_k)\).
It drops sharply with rank (e.g., \(\approx 1.4\times10^{-1}\) at \(r=2\) to
\(\lesssim 5\times10^{-4}\) at \(r=64\)), showing that larger ranks reduce
truncation-induced discrepancy and shrink the asymptotic neighborhood around the
reference behavior.
Here \(\mathrm{dist}_\star\) is the phase-invariant Euclidean distance
\(\mathrm{dist}_\star(\psi,\phi)=\min_{\theta}\|\psi-e^{i\theta}\phi\|_2=\sqrt{2-2|\langle\phi,\psi\rangle|}\in[0,\sqrt{2}]\). Finally, the overlap of \(|u_k^{r}|\) sequences for \(r\ge 8\) (Fig.~\ref{fig}(d)) confirms that once the surrogate truncation is non-binding, the controller signals become rank-insensitive. 


\section{Conclusion}
We developed a stability-certified framework for feedback control of high-dimensional Schr\"odinger systems when the state is propagated with fixed-rank hierarchical Tucker projections. By viewing each truncation as a bounded, rank-dependent perturbation of a nominal contracting sampled-data closed loop, we showed that the projected dynamics remain locally stable and converge toward the reference behavior up to a rank-dependent neighborhood. The analysis also provides a clear rank--accuracy relation: the rank required to meet a desired asymptotic tolerance grows only logarithmically with the inverse of that tolerance.
We further established conditions under which controllers designed directly on the HT-truncated surrogate can be applied to the full system while preserving convergence to the same reference set, up to an explicitly quantified rank-dependent discrepancy.
A compact lattice example illustrates the nominal contraction, the rank–accuracy trend, and the compressibility observed along closed-loop trajectories. Future work includes extending the analysis to nonuniform trees and adaptive ranks, refining the surrogate-to-plant transfer bounds, and testing the approach on larger lattice systems and PDE discretizations.

\bibliographystyle{IEEEtran}
\bibliography{IEEEabrv,main}

\begin{thebibliography}{10}
\providecommand{\url}[1]{#1}
\csname url@samestyle\endcsname
\providecommand{\newblock}{\relax}
\providecommand{\bibinfo}[2]{#2}
\providecommand{\BIBentrySTDinterwordspacing}{\spaceskip=0pt\relax}
\providecommand{\BIBentryALTinterwordstretchfactor}{4}
\providecommand{\BIBentryALTinterwordspacing}{\spaceskip=\fontdimen2\font plus
\BIBentryALTinterwordstretchfactor\fontdimen3\font minus \fontdimen4\font\relax}
\providecommand{\BIBforeignlanguage}[2]{{%
\expandafter\ifx\csname l@#1\endcsname\relax
\typeout{** WARNING: IEEEtran.bst: No hyphenation pattern has been}%
\typeout{** loaded for the language `#1'. Using the pattern for}%
\typeout{** the default language instead.}%
\else
\language=\csname l@#1\endcsname
\fi
#2}}
\providecommand{\BIBdecl}{\relax}
\BIBdecl

\bibitem{dalessandro2007intro}
D.~D'Alessandro, \emph{Introduction to Quantum Control and Dynamics}.\hskip 1em plus 0.5em minus 0.4em\relax Chapman \& Hall/CRC, 2007.

\bibitem{altafini2012quantum}
C.~Altafini and F.~Ticozzi, ``Quantum mechanical bilinear control,'' \emph{IEEE Transactions on Automatic Control}, vol.~57, no.~8, pp. 1898--1917, 2012.

\bibitem{TeschlQM}
G.~Teschl, \emph{Mathematical Methods in Quantum Mechanics}.\hskip 1em plus 0.5em minus 0.4em\relax American Mathematical Society, 2014.

\bibitem{meng2023HSE}
Z.~Meng and Y.~Yang, ``Quantum computing of fluid dynamics using the hydrodynamic schr{\"o}dinger equation,'' \emph{Physical Review Research}, vol.~5, no. 033182, 2023.

\bibitem{Hackbusch2019TensorBook}
W.~Hackbusch, \emph{Tensor Spaces and Numerical Tensor Calculus}, 2nd~ed., ser. Springer Series in Computational Mathematics.\hskip 1em plus 0.5em minus 0.4em\relax Springer, 2019, vol.~56.

\bibitem{grasedyck2010hierarchical}
L.~Grasedyck, ``Hierarchical singular value decomposition of tensors,'' \emph{SIAM journal on matrix analysis and applications}, vol.~31, no.~4, pp. 2029--2054, 2010.

\bibitem{lieb1972finite}
E.~H. Lieb and D.~W. Robinson, ``The finite group velocity of quantum spin systems,'' \emph{Communications in Mathematical Physics}, vol.~28, no.~3, pp. 251--257, 1972.

\bibitem{EisertCramerPlenio2010}
J.~Eisert, M.~Cramer, and M.~B. Plenio, ``Area laws for the entanglement entropy,'' \emph{Reviews of Modern Physics}, vol.~82, no.~1, pp. 277--306, 2010.

\bibitem{BachmayrSchneider2017}
M.~Bachmayr and R.~Schneider, ``Iterative methods based on soft thresholding of hierarchical tensors,'' \emph{Foundations of Computational Mathematics}, vol.~17, pp. 1037--1083, 2017.

\bibitem{ReedSimonII}
M.~Reed and B.~Simon, \emph{Methods of Modern Mathematical Physics II: Fourier Analysis, Self-Adjointness}.\hskip 1em plus 0.5em minus 0.4em\relax Academic Press, 1975.

\bibitem{KohlmannSchrodingerSpectra}
\BIBentryALTinterwordspacing
M.~Kohlmann, ``Schr{\"o}dinger operators and their spectra,'' 2017, lecture notes, University of G{\"o}ttingen. [Online]. Available: \url{https://www.uni-math.gwdg.de/mkohlma/Documents/Schroedinger_operators_and_their_spectra.pdf}
\BIBentrySTDinterwordspacing

\bibitem{SolovejHilbertSpace}
\BIBentryALTinterwordspacing
J.~P. Solovej, ``Operators on hilbert space,'' 2008, lecture notes, University of Copenhagen. [Online]. Available: \url{https://web.math.ku.dk/~solovej/MATFYS/MatFys5.pdf}
\BIBentrySTDinterwordspacing

\bibitem{Trotter1959}
H.~F. Trotter, ``On the product of semigroups of operators,'' \emph{Proc. Amer. Math. Soc.}, vol.~10, pp. 545--551, 1959.

\bibitem{boscain2014multi}
U.~Boscain, M.~Caponigro, and M.~Sigalotti, ``Multi-input schr{\"o}dinger equation: controllability, tracking, and application to the quantum angular momentum,'' \emph{Journal of Differential Equations}, vol. 256, no.~11, pp. 3524--3551, 2014.

\bibitem{HillarLim2013}
C.~J. Hillar and L.-H. Lim, ``Most tensor problems are np-hard,'' \emph{SIAM Journal on Optimization}, vol.~23, no.~1, pp. 545--573, 2013.

\bibitem{polaGirardTabuada}
G.~Pola, A.~Girard, and P.~Tabuada, ``Approximately bisimilar symbolic models for nonlinear control systems,'' \emph{Automatica}, vol.~44, no.~10, pp. 2508--2516, 2008.

\bibitem{girardPappasSurvey}
A.~Girard and G.~J. Pappas, ``Approximate bisimulation: A bridge between computer science and control theory,'' \emph{European Journal of Control}, vol.~17, no. 5--6, pp. 568--578, 2011.

\end{thebibliography}
\end{document}